\documentclass[conference]{IEEEtran}
\usepackage[linesnumbered, ruled]{algorithm2e}
\usepackage{amsmath, amsfonts, amsthm}
\usepackage[hidelinks]{hyperref}
\usepackage[utf8]{inputenc}

\newtheorem{thm}{Theorem}
\newtheorem{lem}[thm]{Lemma}

\theoremstyle{definition}
\newtheorem{defn}[thm]{Definition}
\newtheorem{exa}[thm]{Example}

\usepackage{scalerel}
\newlength\bshft
\bshft=.18pt\relax
\def\fakebold#1{\ThisStyle{\ooalign{$\SavedStyle#1$\cr%
  \kern-\bshft$\SavedStyle#1$\cr%
  \kern\bshft$\SavedStyle#1$}}}

\newcommand{\smprod}{\textstyle\prod\limits}
\newcommand{\smsum}{\textstyle\sum\limits}
\newcommand{\defeq}{\mathrel{\mathop:}=}
\newcommand{\lto}{\longrightarrow}
\DeclareMathOperator{\quorem}{quo\_rem}
\DeclareMathOperator{\ev}{ev}
\DeclareMathOperator{\GR}{GR}
\newcommand{\mcB}{\mathcal B}
\newcommand{\mcC}{\mathcal C}
\newcommand{\mcM}{\mathcal M}
\newcommand{\mcN}{\mathcal N}
\newcommand{\mcT}{\mathcal T}

\newcommand{\ol}{\overline}
\newcommand{\F}{\mathbb F}
\newcommand{\Z}{\mathbb Z}
\let\theta\vartheta
\let\phi\varphi

\begin{document}

\title{List Decoding of Quaternary Codes\\ in the Lee Metric}

\author{\IEEEauthorblockN{Marcus Greferath}
  \IEEEauthorblockA{%
    \textit{School of Mathematics and Statistics} \\
    \textit{University College Dublin} \\
    Dublin, Republic of Ireland \\
    marcus.greferath@ucd.ie} \and
  \IEEEauthorblockN{Jens Zumbr\"agel}
  \IEEEauthorblockA{%
    \textit{Faculty of Computer Science and Mathematics} \\
    \textit{University of Passau} \\
    Passau, Germany \\
    jens.zumbraegel@uni-passau.de}}

\maketitle

\begin{abstract} 
  We present a list decoding algorithm for quaternary negacyclic codes
  over the Lee metric.  To achieve this result, we use a
  Sudan-Guruswami type list decoding algorithm for Reed-Solomon codes
  over certain ring alphabets.  Our decoding strategy for negacyclic
  codes over the ring~$\fakebold {\Z_4}$ combines the list decoding
  algorithm by Wu with the Gröbner basis approach for solving a key
  equation due to Byrne and Fitzpatrick.
\end{abstract}

\medskip

\begin{IEEEkeywords}
  Codes over rings, negacyclic codes, list decoding, polynomial factorization, interpolation, lifting.
\end{IEEEkeywords}


\section{Introduction}

It has been observed in the literature, that cyclic codes over the
alphabet~$\Z_4$ equipped with the Lee distance often have a larger
minimum distance and a better decoding capability than predicted by
their “designed distance”.  For this reason a list decoding approach
is suggested, although technical difficulties due to zero divisors in
rings are to be expected.

Codes over integer residue rings equipped with the Lee metric have
currently received increasing attention in the community. This stems
on one hand from the fact, that McEliece type cryptosystems based on
the Lee metric may offer increased security but lack so far the
availability of good codes.  On the other hand, connections between
lattice-based cryptography and coding theory are gaining attention, a
connection which is established by the Lee metric as it approximates
the Euclidean distance on lattices.

In this paper, we present a list decoding algorithm for quaternary
negacyclic codes with Lee distance.  To arrive there, we employ a
Sudan-Guruswami type list decoding algorithm for Reed-Solomon codes over
ring alphabets.  Note that this part is related to work by Armand~%
\cite{a05a, a05b} (see also~\cite{qbc13}), while our setup and
factorisation algorithm slightly differs.  For negacyclic codes over
the ring~$\Z_4$ (cf.~\cite{w99, bgpz13}), our decoding strategy
combines the list decoding algorithm by Wu~\cite{w08} with the Gröbner
basis approach for solving a key equation due to Byrne and
Fitzpatrick~\cite{bf02}.


\section{Preliminaries}

Let~$p$ be a prime and let~$m$ and~$r$ be positive integers.  We
denote by~$\F_{p^m}$ the finite field with~$p^m$ elements and let
$\GR(p^r, m)$ be the Galois ring of characteristic~$p^r$ and
degree~$m$.  The latter can be constructed as the quotient ring
$\Z_{p^r}[X] / (f)$ with monic polynomial $f \in \Z_{p^r}[X]$ of
degree~$m$ such that $f \bmod p$ in $\Z_p[X]$ is irreducible.  The
Galois ring $\GR(p^r, m)$ is a local ring with maximal ideal~$(p)$
such that all its ideals form a chain
$\{ (p^i) \mid 0 \le i \le r \}$, and one has the canonical
homomorphism onto the residue field
\[ \mu \colon \GR(p^r, m) \lto \F_{p^m} \,. \]

\subsection{Hensel lifting}

Hensel lifting is important for both the construction of Galois rings
and the factorisation of polynomials over such rings.  The setup
actually applies to a quite general situation,
following~\cite[Ch.~15]{vzgg}.  By a ring we mean a commutative ring
with identity.

Elements $g, h$ in a ring~$R$ are called \emph{Bézout-coprime} if
$s g + t h = 1$ for certain $s, t \in R$.  Note that for principal
ideal domains this definition amounts to the usual notion of
coprimeness of having no common factor.  But in general it is
stronger; in fact, coprime elements need not be Bézout-coprime,
consider e.g.\ $2, X \in \Z_4[X]$ or $X, Y \in \F_2[X, Y]$.

Now let~$R$ be a ring and $a \in R$.  The basic Hensel step allows to
lift a polynomial factorisation over the quotient ring $R / (a)$ to a
factorisation over the quotient ring $R / (a^2)$.  More precisely, 
suppose that we have $f^* \in (R / (a^2))[X]$ and $g, h \in
(R / (a))[X]$, with~$h$ monic, being Bézout-coprime such that
\[ f^* \bmod a \,=\, g \!\cdot\! h \,, \] then we find $g^*, h^* 
\in (R / (a^2))[X]$, with~$h^*$ monic, still Bézout-coprime such that
\[ f^* \,=\, g^* \!\cdot\! h^* \,. \]
Furthermore, given $s, t \in (R/(a))[X]$ such that $s g + t h = 1$ we
can also compute elements $s^*, t^* \in (R/(a^2))[X]$ satisfying
$s^* g^* + t^* h^* = 1$.  The details are given in Algorithm~%
\ref{alg:hensel} (cf.~\cite[Alg.~15.10]{vzgg}).

\begin{algorithm}
  \caption{Hensel step}\label{alg:hensel}
  \SetKwInOut{Input}{Input}
  \SetKwInOut{Output}{Output}
  \Input{$f^* \in (R/(a^2))[X]$ and $g, h, s, t \in (R/(a))[X]$,
    $h$~monic \\
    \hfill such that $f^* \bmod a = g h$ and $s g + t h = 1$}
  \Output{$g^*, h^*, s^*, t^* \in (R/(a^2))[X]$, $h^*$~monic \\
    \hfill such that $f^* = g^* h^*$ and $s^* g^* + t^* h^* = 1$\medskip} 
  coerce $g, h, s, t$ into $(R/(a^2))[X]$ \\[1mm]
  $e = f^* - g h$ \\
  $q, r = \quorem(s e, h)$ \hfill (note that $a \mid e, q, r$) \\
  $g^* = g + t e + q g$; $h^* = h + r$ \\[1mm]
  $b = s g^* + t h^* - 1$ \\
  $c, d = \quorem(s b, h^*)$ \hfill (note that $a \mid b, c, d$) \\
  $s^* = s - d$; \ $t^* = t - t b - c g^*$ \\[1mm]
  \Return{$g^*, h^*, s^*, t^*$}
\end{algorithm}

Applying the algorithm repeatedly, we can lift this way factorisations
modulo~$a$ to factorisations modulo $a^2, a^4, a^8$, etc.

\subsection{Bivariate polynomial factorisation}

Due to zero divisors in general rings one cannot expect their
polynomials to have as nice factorisation properties and algorithms as
in the field case.  Simple examples like
$X \!\cdot\! X = (X \!+\! 2) \!\cdot\! (X \!+\! 2) \in \Z_4[X]$
already show a non-unique factorisation behaviour.  However, provided
that the factors can be mapped into square-free Bézout-coprime
factors, say in the univariate polynomial ring over a field, we are
able to obtain a factorisation by Hensel lifting.

For the list decoding at hand we are interested in the factorisation
of a bivariate polynomial $Q \in R[X, Y]$ over a Galois ring
$R \defeq \GR(p^r, m)$.  Note that for Hensel lifting one cannot
simply use a factorisation as a bivariate polynomial over its residue
field $F \defeq \F_{p^m}$, as the prime factors in $F[X, Y]$ will
usually not be Bézout-coprime.

In the following we describe an adaption of the Zassenhaus
factorisation method (cf.~\cite[Sec.~15.6]{vzgg}) to work over Galois
rings.  The idea is to use Hensel lifting on two levels, first for
lifting a univariate polynomial factorisation in $F[X]$ to $R[X]$, and
then for lifting the factorisation in $(R[Y] / (Y \!-\! u))[X] \cong
R[X]$ to one in $(R[Y] / (Y \!-\! u)^{\ell})[X]$, from which we may
deduce the factorisation in $R[X, Y]$.  Details follow.

\begin{algorithm}\label{alg:factor}
  \caption{Bivariate polynomial factorisation}\label{alg:bivariate}
  \SetKwInOut{Input}{Input}
  \SetKwInOut{Output}{Output}
  \Input{$Q \in R[X, Y]$}
  \Output{factorisation $Q = Q_1 \!\cdot\! ... \!\cdot\! Q_t$
    into irreducibles \medskip}
  choose $u \in R$ such that $\mu Q(X, u) \in F[X]$ is square-free \\
  factorise $\mu Q(X, u) = d g_1 \!\cdot\! ... \!\cdot\! g_s$ over~$F$ \\
  use Hensel lifting to obtain $Q(X, u) = c f_1 \!\cdot\! ...
  \!\cdot\! f_s$ over~$R$ \\
  use Hensel lifting to obtain $\ol Q = C F_1 \!\cdot\! ...
  \!\cdot\! F_s$ over $R[Y] / (Y \!-\! u)^{\ell}$ \\
  combine factors to obtain $Q = Q_1 \!\cdot\! ... \!\cdot\! Q_t$
  over $R[Y]$
\end{algorithm}

\begin{enumerate}
\item Given $Q \in R[X, Y]$, we compute a univariate polynomial
  $Q_u \defeq Q(X, u) \in R[X]$ for some $u \in R$ and consider its
  reduction $\ol Q_u \defeq \mu(Q_u) \in F[X]$, which we can factorise
  by classical methods.  If the polynomial $\ol Q_u$ is square-free,
  so that there are no repeated factors, we proceed; otherwise we
  try a different choice of~$u$.
\item In the prime factorisation $\ol Q_u = d g_1 \!\cdot\! ...
  \!\cdot\! g_s$ over~$F$ (with $d \in F$ and the~$g_i$ monic),
  products of distinct factors are coprime and thus Bézout-coprime,
  since~$F[X]$ is a principal ideal domain.  Thus we may apply
  multifactor Hensel lifting (cf.~\cite[Alg.~15.17]{vzgg}) using the
  basic Hensel step (Algorithm~\ref{alg:hensel}) to obtain a
  factorisation of $Q_u = c f_1 \!\cdot\! ... \!\cdot\! f_s \in R[X]$
  into irreducibles, where $\mu f_i = g_i$ for all~$i$.
\item Given a factorisation of $Q_u = c f_1 \!\cdot\! ... \!\cdot\! f_s$
  in $R[X]$ into distinct Bézout-coprime irreducibles (with $c \in R$
  and the~$f_i$ monic), we apply Hensel lifting over the polynomial
  ring $R[Y]$ using the modulus $a \defeq Y \!-\! u$ to arrive at a
  factorisation $\ol Q = C F_1 \!\cdot\! ... \!\cdot\! F_s$ in
  $(R[Y] / (Y \!-\! u)^{\ell})[X]$ for some large enough~$\ell$.
\item We combine the factors $C, F_1, \dots, F_s$ into products $Q_1,
  \dots, Q_t$ (with $t \le s$) such that $Q = Q_1 \!\cdot\! ...
  \!\cdot\! Q_t$ holds in $R[X, Y]$, a step which is necessary as the
  quotient ring may introduce additional factors.  Since there is a
  bound in the $Y$-degree of the coefficients of the actual
  factors~$Q_i$, we can find them by computing products of
  $1, 2, 3, \dots$ factors until the bound is satisfied.
\end{enumerate}

We summarise our bivariate polynomial factorisation method in
Algorithm~\ref{alg:bivariate}.  Notice that all steps are polynomial
time except possibly for the last combine-factors step, which however
seems to be very efficient in practice.  We leave a more thorough
study of the factoring algorithm for future work.


\section{List decoding of Reed-Solomon codes\\
  over rings}\label{sec:rs}

List decoding of Reed-Solomon and related codes over Galois rings has
been considered by Armand~\cite{a05a, a05b}, while our setup and
factorisation algorithm is slightly different.  See also~\cite{qbc13}
for list decoding of Reed-Solomon codes over more general rings.  We
briefly present here the main concepts for Galois rings, as required
subsequently.

Let $R \defeq \GR(p^r, m)$ be a Galois ring of characteristic~$p^r$
and degree~$m$, and let $\theta \in R$ be an element of multiplicative
order~$p^m \!-\! 1$.  Such an element can be obtained by taking the
defining polynomial $f \in \Z_{p^r}[X]$ of~$R$ to be the Hensel lift
of a primitive polynomial over~$\Z_p$ of degree~$m$ and then letting~%
$\theta$ be the class of~$X$ modulo~$f$.

The set $\mcT \defeq \{ \theta^i \mid 0 \le i < p^m \!-\! 1 \}
\cup \{ 0 \} \subseteq R$ then maps bijectively onto the residue
field~$\F_{p^m}$ under the canonical map~$\mu$, and is called
\emph{Teichmüller set}.

\begin{defn}
  Given $n \le p^m$ and $1 \le k \le n$ as well as $\alpha_1, \dots,
  \alpha_n \in \mcT$ distinct, we define the $[n, k]$ \emph{Reed-%
    Solomon code} over~$R$ as the evaluation code 
  \[ \mcC \defeq \!\big\{ \ev(f) \!\defeq\! \big( f(\alpha_1), \dots,
    f(\alpha_n) \big) \mid f \in R[X] ,\, \deg f < k \big\} . \]
\end{defn}

\begin{lem}\label{lem:mindist}
  The (Hamming) minimum distance of~$\mcC$ equals $d \defeq n - k + 1$,
  thus the code is maximum distance separable.
\end{lem}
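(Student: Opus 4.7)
The plan is to mimic the classical field proof, turning the usual ``a degree-$(k{-}1)$ polynomial has at most $k{-}1$ roots'' argument into one valid over the Galois ring~$R$. The single ring-theoretic fact we need is that for any two distinct Teichmüller elements $\alpha, \beta \in \mcT$, the difference $\alpha - \beta$ is a unit in~$R$. This holds because $\mu$ restricts to a bijection $\mcT \to \F_{p^m}$, so $\mu(\alpha - \beta) \neq 0$, and an element of the local ring~$R$ is a unit precisely when its image under~$\mu$ is nonzero (as the maximal ideal is~$(p)$).

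With this in hand, I would first show that if $f \in R[X]$ is nonzero with $\deg f < k$, then $f$ has at most $k-1$ roots among $\alpha_1, \dots, \alpha_n$. Proceed by induction: if $f(\alpha_{i_1}) = 0$, divide by the monic linear polynomial $X - \alpha_{i_1}$ to write $f(X) = (X - \alpha_{i_1}) g(X)$ with $\deg g = \deg f - 1$. For a second root $\alpha_{i_2}$, evaluation gives $0 = f(\alpha_{i_2}) = (\alpha_{i_2} - \alpha_{i_1}) g(\alpha_{i_2})$, and since $\alpha_{i_2} - \alpha_{i_1}$ is a unit we conclude $g(\alpha_{i_2}) = 0$, so we may iterate. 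After $j$ roots we obtain $\deg f \ge j$, hence $j \le k-1$.

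This bound immediately has two consequences. First, the evaluation map $f \mapsto \ev(f)$ from $\{ f \in R[X] \mid \deg f < k \}$ to $R^n$ is injective, since $n \ge k$ means a polynomial with $n$ roots in~$\mcT$ must be zero. In particular $|\mcC| = |R|^k$. Second, any nonzero codeword $\ev(f)$ has at most $k-1$ zero coordinates, hence Hamming weight at least $n - k + 1$; by $R$-linearity of~$\mcC$ this yields $d(\mcC) \ge n - k + 1$.

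For the matching upper bound I would invoke the Singleton bound over an arbitrary alphabet of size $|R|$: any code of size $|R|^k$ in $R^n$ satisfies $d \le n - k + 1$, which gives equality and the MDS property. The only subtlety worth flagging is the unit-difference property of~$\mcT$, which replaces the field-theoretic fact that all nonzero elements are invertible; once this is isolated, everything else is the textbook argument.
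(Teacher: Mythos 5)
Your proof is correct, but it takes a genuinely different route from the paper's. The paper argues in one shot with linear algebra: a codeword of weight below $n-k+1$ gives a polynomial $f$ of degree $<k$ vanishing at $k$ of the points, hence $(f_0,\dots,f_{k-1})V=0$ for the $k\times k$ Vandermonde matrix $V=({\alpha_j}^i)_{ij}$, whose determinant $\prod_{i<j}(\alpha_j-\alpha_i)$ is a unit in~$R$ because its image under~$\mu$ is a product of differences of distinct field elements; therefore $f=0$. Your factor-theorem induction replaces the determinant by iterated division by the monic linear factors $X-\alpha_{i_\nu}$, but it rests on exactly the same ring-theoretic kernel, which you correctly isolate: differences of distinct Teichm\"uller representatives are units, since $\mu$ is injective on~$\mcT$ and $R$ is local with maximal ideal $(p)=\ker\mu$. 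Your version is more elementary (no determinants, no need for $V$ to be invertible over a ring) and also more complete: the paper's proof as written only establishes $d\ge n-k+1$ and leaves the converse Singleton direction, and hence the MDS claim, implicit, whereas you prove injectivity of~$\ev$, count $|\mcC|=|R|^k$, and invoke the alphabet-level Singleton bound to obtain equality. (Alternatively, evaluating $f=(X-\alpha_1)\cdots(X-\alpha_{k-1})$ exhibits a codeword of weight exactly $n-k+1$, since its values at the remaining points are products of unit differences.) What the paper's formulation buys in exchange is reusability: the Vandermonde argument is invoked verbatim again in the proof of Lemma~\ref{lem:listdec} to conclude $h=0$ from $n-t$ roots, where your iterated-division argument would have to be repeated with the same adaptation.
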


\begin{proof}
  Suppose that $c = \ev(f) \in C$ has weight $< d$, so that $f \in R[X]$
  has at least~$k$ zeros, say (w.l.o.g.)  $\alpha_1, \dots, \alpha_k$.
  Writing $f = \sum_{i=0}^{k-1} f_i X^i$ it follows that $(f_0, \dots,
  f_{k-1}) V = 0$ for the Vandermonde matrix $V \defeq ({\alpha_j}^i)_{ij}$,
  with \[ \det V = \smprod_{i < j} (\alpha_j \!-\! \alpha_i) \] a unit
  in~$R$, since $\mu \det V = \prod_{i < j} (\mu \alpha_j \!-\!
  \mu \alpha_i) \ne 0$.  Therefore, $f = 0$ and thus $c = 0$.
\end{proof}

We can in fact correct error weights beyond half the minimum distance
by adapting the list decoding approach by Sudan~\cite{s97}, as
described next.  It consists of an interpolation step which produces a
bivariate polynomial $Q \in R[X, Y]$, and a factorisation step using
Algorithm~\ref{alg:factor} by which the codewords within the list
decoding radius are obtained.

For the interpolation step we fix a finite set~$S$ of indices $(i, j)$
describing terms $X^i Y^j$, and require that~$S$ has more than~$n$
elements.  Given a received word $y \in R^n$ we consider the
interpolation problem
\[ Q(\alpha_i, y_i) = 0 \quad \text{for } 1 \le i \le n \,, \]
where $Q = \sum_{(i, j) \in S} c_{ij} X^i Y^j \in R[X, Y]$ with the
coefficients $c_{ij} \in R$ to be determined.  This is a linear system
with more variables than equations and thus contains a nonzero
solution.  Such a solution can be obtained using Smith normal form,
which is available over Galois rings as these are chain rings
(cf.~\cite[Sec.~2-D]{fnks14}).

Concretely, for a list error radius~$t$ we let \[ S \defeq \{ (i, j)
  \mid i + (k \!-\! 1) j \le n - t \} \,. \]
The next result shows that the interpolation polynomial~$Q$ carries
information on all codewords within distance~$\le t$.

\begin{lem}\label{lem:listdec}
  Suppose that $y = c + e$ with $c = \ev(f) \in \mcC$ for $f \in
  R[X]$, $\deg f < k$, and~$e$ an error vector of weight~$\le t$.
  Then $Q(X, f) = 0$.
\end{lem}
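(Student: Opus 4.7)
The plan is to set $P(X) \defeq Q(X, f(X)) \in R[X]$ and show that $P$ vanishes identically, using a Sudan-type degree-versus-roots argument in which the Teichmüller set~$\mcT$ plays the role normally played by the ground field.

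First I would bound $\deg P$.  Each monomial $X^i Y^j$ appearing in $Q$ satisfies $(i, j) \in S$, so $i + (k-1)j \le n - t$; since $\deg f < k$, the substituted term $X^i f(X)^j$ has degree at most $i + j(k-1) \le n - t$, giving $\deg P \le n - t$.  Next I would count roots: since the error vector~$e$ has weight at most~$t$, at least $n - t$ indices~$i$ satisfy $e_i = 0$, and hence $y_i = f(\alpha_i)$.  At each such index the interpolation condition on~$Q$ yields $P(\alpha_i) = Q(\alpha_i, y_i) = 0$, providing at least $n - t$ distinct Teichmüller roots of~$P$.

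Finally I would conclude $P \equiv 0$ by the same Vandermonde reasoning used in the proof of Lemma~\ref{lem:mindist}.  Distinct Teichmüller elements $\alpha_i, \alpha_j$ have unit difference $\alpha_i - \alpha_j$ in~$R$---their residues in $\F_{p^m}$ are distinct and nonzero, and~$R$ is local---so one can successively divide~$P$ by the monic linear factors $X - \alpha_i$ at its Teichmüller roots, with each intermediate quotient well-defined over~$R$.  Matching this process against the degree bound on~$P$ forces $P = 0$, which is exactly the claim $Q(X, f) = 0$.

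The delicate point I expect to require care is the Sudan-style calibration that keeps the weighted-degree shape of~$S$ aligned with the guaranteed number of error-free positions, so that the degree-versus-roots count is decisive at list radius~$t$.  Once that balance is in place, the passage from a field to the Galois ring~$R$ is essentially cosmetic, supplied entirely by the unit-differences property of the Teichmüller set.
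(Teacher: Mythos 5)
Your proposal is correct and essentially reproduces the paper's own proof: the same weighted-degree bound $\deg Q(X, f) \le n - t$ derived from the shape of~$S$, the same count of at least $n - t$ roots~$\alpha_i$ at the error-free positions, and the same underlying fact that distinct Teichmüller representatives have unit difference in the local ring~$R$, your only cosmetic deviation being that the paper reuses the Vandermonde-determinant argument of Lemma~\ref{lem:mindist} while you successively divide out monic factors $X - \alpha_i$ via the factor theorem (an equivalent mechanism resting on the same unit-difference fact). One caveat: your final count (degree $\le n - t$ against $n - t$ roots) is off by one---it only yields $P = c\prod_i (X - \alpha_i)$---but the paper's proof as literally written has the identical slack, and both are repaired by reading the defining inequality of~$S$ strictly, i.e.\ $i + (k\!-\!1)j < n - t$, which the cardinality~$65$ computed in Example~\ref{exa:rs} confirms is what the authors intend.
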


\begin{proof}
  Considering $h \defeq Q(X, f) = \sum_{(i, j) \in S} c_{ij} X^i f^j
  \in R[X]$, then since $\deg f < k$ and by the definition of~$S$,
  we see that $\deg h \le n - t$.  On the other hand, we have
  $y_i = c_i = f(\alpha_i)$ and thus $h(\alpha_i) = Q(\alpha_i,
  f(\alpha_i)) = Q(\alpha_i, y_i) = 0$ whenever $e_i = 0$, so for
  at least $n \!-\! t$ values~$\alpha_i$.  As in the proof of
  Lemma~\ref{lem:mindist} we can use the Vandermonde determinant
  and the fact that the $\mu \alpha_i$ are distinct to deduce
  that $h = 0$.
\end{proof}

\begin{exa}\label{exa:rs}
  Consider the $[64, 6]$ Reed-Solomon code over $R \defeq \GR(4, 6)$
  defined by the full Teichmüller set.  While the minimum distance is
  $d = 59$ by Lemma~\ref{lem:mindist} and thus the unique decoding
  radius is~$29$, we can list decode up to radius $t = 41$.  Indeed,
  the set $S \defeq \{ (i, j) \mid i \!+\! 5 j \le 23 \}$ is of
  cardinality~$65$, so we can conpute an interpolation polynomial~$Q$
  and in light of Lemma~\ref{lem:listdec} find the list of codewords
  by factorising this bivariate polynomial using
  Algorithm~\ref{alg:factor}.
\end{exa}

\subsection{Multiplicities}\label{ssec:guruswami}

We can also apply the Guruswami-Sudan list decoding approach~\cite{gs99}
incorporating multiplicities to the present situation.  For this we
alter the interpolation step such that every $(\alpha_i, y_i)$ should
be a zero $Q \in R[X, Y]$ with multiplicity~$e$, which means that for
$Q(X + \alpha_i, Y + y_i)$ every coefficient of $X^i Y^j$ with
$i + j < e$ vanishes.  This amounts to $\frac 1 2 e (e \!+\! 1)$
linear conditions for each point, so that we require the set~$S$ to
have more than $\frac 1 2 e (e \!+\! 1) n$ elements.  But now, if
there are~$t$ errors, the polynomials~$f$ for every codeword within
this radius the polynomial $h \defeq Q(X, f) \in R[X]$ has $n \!-\! t$
roots with multiplicity at least~$e$, i.e., $(X - \alpha)^e \mid h$.
This forces~$h$ to be zero, provided that we take
$S \defeq \{ (i, j) \mid i + (k \!-\! 1) j \le e (n \!-\! t) \}$, as
the next result shows.

\begin{lem}
  Let $h \in R[X]$ be a polynomial of degree $< k e$ such that
  $(X \!-\! \alpha_i)^e \mid h$ for at least~$k$ distinct $\alpha_i
  \in \mcT$.  Then~$h$ equals zero.
\end{lem}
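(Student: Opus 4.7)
The plan is to mimic the Vandermonde-style argument of Lemma~\ref{lem:mindist}, but at the level of polynomial divisibility instead of determinants, so as to avoid factorials that need not be units in~$R$. First I would observe that since the $\alpha_i$ lie in the Teichmüller set~$\mcT$ and are distinct, their images under~$\mu$ are pairwise distinct in~$F$; hence each difference $\alpha_i - \alpha_j$ with $i \ne j$ lies outside the maximal ideal~$(p)$ and is therefore a unit of the local ring~$R$.

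Consequently $(X - \alpha_i)$ and $(X - \alpha_j)$ are Bézout-coprime in $R[X]$ for $i \ne j$, via the explicit identity $(\alpha_j - \alpha_i)^{-1} \bigl((X - \alpha_i) - (X - \alpha_j)\bigr) = 1$. Bézout-coprimality being preserved under taking powers, the monic polynomials $(X - \alpha_i)^e$ are pairwise Bézout-coprime as well. I would next invoke the standard fact that if $a, b$ in a commutative ring are Bézout-coprime and both divide some element $c$, then $ab \mid c$: writing $c = a u = b v$ and $s a + t b = 1$, one gets $c = s a c + t b c = a b (s v + t u)$. An easy induction, together with the observation that being Bézout-coprime to each of several elements implies being Bézout-coprime to their product, extends this to any number of pairwise Bézout-coprime divisors.

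Applying this with the divisors $(X - \alpha_i)^e$ of~$h$, I conclude that the monic polynomial $\smprod_{i=1}^{k}(X - \alpha_i)^e$, which has degree $ke$, divides~$h$ in $R[X]$. Since $\deg h < ke$, the quotient must vanish, and therefore $h = 0$. The only mildly delicate step in this plan is checking that powers of pairwise Bézout-coprime linear polynomials still admit explicit Bézout combinations and that these combine cleanly across the product; once this is in place, the conclusion follows simply by comparing degrees.
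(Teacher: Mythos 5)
Your proof is correct, but it takes a genuinely different route from the paper's. The paper descends the chain of ideals of the Galois ring: reducing modulo~$p$, the hypothesis gives $(X - \mu\alpha_i)^e \mid \mu h$ for $k$ distinct elements $\mu\alpha_i$ of the residue field, so $\mu h = 0$ by comparing degrees over a field; hence $h = p \tilde h$, the divisibility hypothesis passes to~$\tilde h$ viewed over $\GR(p^{r-1}, m)$ (implicitly using uniqueness of division with remainder by the monic $(X - \alpha_i)^e$), and iterating $r$ times yields $h = 0$. You instead stay entirely in $R[X]$: the differences $\alpha_i - \alpha_j$ are units since $R$ is local and the $\mu\alpha_i$ are distinct, so the linear factors are Bézout-coprime; Bézout-coprimality of the $e$-th powers follows by raising $s a + t b = 1$ to the power $2e - 1$ and splitting the binomial expansion, pairwise Bézout-coprime monic divisors combine into a common divisor exactly as you compute, and the resulting monic divisor of degree $ke$ forces $h = 0$ --- monicity guaranteeing that degrees add even in the presence of zero divisors, a point you rightly rely on. Your route is more self-contained and dovetails with the Bézout-coprimeness theme the paper develops for Hensel lifting; it also proves the statement over any commutative ring in which the $\alpha_i - \alpha_j$ are units, with no appeal to the chain-ring structure. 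The paper's descent, by contrast, reuses the residue-field degree argument wholesale and sidesteps the (routine, but worth writing out once) verifications about powers and products of Bézout-coprime elements that you correctly identify as the delicate step of your plan. Both arguments are sound; yours trades the paper's structural induction on $p$-adic levels for explicit Bézout identities.
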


\begin{proof}
  Over the residue field we have $(X \!-\! \mu \alpha_i)^e \mid \mu h$,
  where the $\mu \alpha_i \in F$ are distinct, hence we can argue by
  degrees to deduce $\mu h = 0$.  Therefore, $h = p \tilde h$ and
  we may view the polynomial $\tilde h$ over $\GR(p^{r-1}, m)$ with
  $\deg \tilde h = \deg h$ and still have $(X \!-\! \mu \alpha_i)^e
  \mid \mu \tilde h$ over the corresponding residue field.  Continuing
  this way, we see that $h = 0$.
\end{proof}

\begin{exa}
  Consider the $[64, 6]$ Reed-Solomon code over $R \defeq \GR(4, 6)$
  from Example~\ref{exa:rs}.  Using multiplicity $e = 2$, i.e., double
  roots, we can now list decode up to radius $t = 43$.  We take a set
  $S \defeq \{ (i, j) \mid i \!+\! 5 j \le 42 \}$ of cardinality
  $198 > 3 \cdot 64$, which guarantees the existence of an
  interpolation polynomial~$Q$.  Then every codeword polynomial~$f$
  within the decoding radius satisfies $Q(X, f) = 0$, so we can find
  these again by factoring~$Q$ with Algorithm~\ref{alg:factor}.
\end{exa}


\section{A jump into the Byrne-Fitzpatrick algorithm}%
\label{sec:bf}

The Byrne-Fitzpatrick algorithm~\cite{bf01, bf02} can be used to solve
key equations over Galois rings~$R$.  It considers for $U \in R[Z]$
the \emph{solution modules} \[ \mcM_k \defeq
  \big\{ (f, g) \in R[Z]^2 \mid U f \equiv g \bmod Z^k \big\} \,, \]
and given a Gröbner basis for~$\mcM_k$ refines it to a Gröbner basis
for~$\mcM_{k+1}$.  The method is dubbed “solution by approximations”
and is reminiscent of the Berlekamp-Massey algorithm.  Recently, the
algorithm was adapted to work also over skew polynomials over Galois
rings~\cite{prwz21}.

More specifically, one considers a term order on the set of terms
$(Z^j, 0)$ and $(0, Z^j)$, $j = 0, 1, 2, \dots$, so that leading term
and leading monomial of a nonzero pair $(f, g) \in R[Z]^2$ are
defined.  Given a module $\mcM \subseteq R[Z]^2$, a set
$\mcB \subseteq \mcM$ is called \emph{Gröbner basis} for~$\mcM$ if for
each $(f, g) \in \mcM$ there exists a Gröbner basis element such that
its leading monomial divides the leading monomial of $(f, g)$.

Now given a Gröbner basis~$\mcB_k$ for~$\mcM_k$, to construct a
Gröbner basis~$\mcB_{k+1}$ for~$\mcM_{k+1}$ one computes for each
$(f_i, g_i) \in \mcB_k$ the \emph{discrepancy}
\[ \zeta_i \defeq (U f_i - g_i)_k \in R \,, \]
where the subscript denotes the $k$-th coefficient.  Then if
$\zeta_i = 0$ we put $(f_i, g_i)$ into~$\mcB_{k+1}$.  Otherwise, we
look for some $(f_j, g_j) \in \mcB_k$ with smaller leading term such
that $\zeta_j \mid \zeta_i$, say $\zeta_i = q \zeta_j$ for some
$q \in R$, in which case we put $(f_i, g_i) - q (f_j, g_j)$ into~%
$\mcB_{k+1}$; if there is none, we put $Z (f_i, g_i)$ into~$\mcB_{k+1}$.

In order to adapt the list decoding approach by Wu~\cite{w08}, we are
interested in the following question: Given a Gröbner basis for~%
$\mcM_k$, how can we construct a Gröbner basis for~$\mcM_{k + \ell}$
with $\ell > 1$?  Thus we deal with a “jump” in the solution-by-%
approximations method.  At this point we are unable to fully solve this
problem, but we outline a method to construct elements in~%
$\mcM_{k + \ell}$ satisfying certain degree constraints, which will be
sufficient for the list decoding approach.

\begin{lem}\label{lem:ab}
  Given Gröbner basis elements $(f_i, g_i)$ and $(f_j, g_j)$
  in~$\mcB_k$ with leading coefficient a unit in~$R$, there exist
  polynomials $a, b \in R[Z]$ with unit leading coefficient and
  $\deg a + \deg b \le \ell$ such that $a (f_i, g_i) - b (f_j, g_j)
  \in \mcM_{k + \ell}$.
\end{lem}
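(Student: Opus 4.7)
The plan is to reformulate the condition as a single polynomial congruence, solve it over the residue field~$F$, and then Hensel-lift to~$R$. Since $(f_\nu, g_\nu) \in \mcM_k$ for $\nu \in \{i, j\}$, one may write $U f_\nu - g_\nu = Z^k u_\nu$ with $u_\nu \in R[Z]$, so the inclusion $a(f_i, g_i) - b(f_j, g_j) \in \mcM_{k+\ell}$ is equivalent to
\[ a u_i \;\equiv\; b u_j \pmod{Z^\ell} \]
in $R[Z]$. The task reduces to exhibiting $(a, b) \in R[Z]^2$ with unit leading coefficients and $\deg a + \deg b \le \ell$ satisfying this single congruence.

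Working first over~$F$, set $\ol u_\nu \defeq \mu u_\nu$ and fix a split $\alpha + \beta = \ell$. The $F$-linear map $\ol\Psi \colon F[Z]_{\le \alpha} \times F[Z]_{\le \beta} \to F[Z]/(Z^\ell)$ sending $(\ol a, \ol b) \mapsto \ol a \ol u_i - \ol b \ol u_j \bmod Z^\ell$ has domain of $F$-dimension $\ell + 2$ and codomain dimension~$\ell$, so by rank-nullity its kernel has $F$-dimension at least~$2$. I would extract from such a kernel a pair $(\ol a, \ol b)$ with \emph{both} components nonzero: two independent kernel vectors whose zero entries sit in complementary coordinates sum to such a pair, and in the degenerate case where the kernel lies on a single coordinate axis one adjusts the split $(\alpha, \beta)$ or adjoins a Bézout-type annihilator pair. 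This yields $(\ol a, \ol b) \in F[Z]^2$ with $\deg \ol a + \deg \ol b \le \ell$, both entries nonzero, satisfying $\ol a \ol u_i \equiv \ol b \ol u_j \pmod{Z^\ell}$.

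A degree-preserving lift $(a_0, b_0) \in R[Z]^2$ of $(\ol a, \ol b)$ then has unit leading coefficients, and the defect $a_0 u_i - b_0 u_j \bmod Z^\ell$ lies in $pR[Z]$. One iteratively absorbs this defect by adding $p$-multiple corrections obtained from the same residue-field argument applied to each successive defect; since $R$ has characteristic $p^r$, this terminates in at most $r-1$ steps and produces the required $(a, b)$. The main obstacle lies in this lifting stage: each correction must be chosen without exceeding the initial degrees $\alpha, \beta$, which requires that the successive defects lie in the image of the \emph{restricted} map~$\ol\Psi$. A cleaner route would be to argue directly over $R[Z]^2$, constructing a Gröbner basis of the solution module $\mcN \defeq \{(a, b) \in R[Z]^2 \mid a u_i \equiv b u_j \pmod{Z^\ell}\}$ by the Byrne-Fitzpatrick procedure applied to this derived key equation and reading off a basis element of total coordinate-degree at most~$\ell$.
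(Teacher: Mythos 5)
Your reduction step coincides exactly with the paper's: writing $U f_\nu - g_\nu = Z^k u_\nu$ and translating membership in $\mcM_{k+\ell}$ into the single congruence $a u_i \equiv b u_j \pmod{Z^\ell}$ is precisely the paper's introduction of the discrepancy polynomials $h_\nu$ (your $u_\nu$ agree with the $h_\nu$ modulo $Z^\ell$). From there, however, the paper does not argue over the residue field at all: it solves the derived key equation by computing a Gröbner basis of the solution module $\mcN \defeq \{ (a, b) \in R[Z]^2 \mid a h_i - b h_j \equiv 0 \bmod Z^\ell \}$ with a slight adaptation of the Byrne--Fitzpatrick algorithm, reading off an element with the stated degree and leading-coefficient properties. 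That is exactly the ``cleaner route'' you relegate to your final sentence; your actual main argument --- rank-nullity over $F$ followed by defect-absorbing corrections by $p$-multiples --- is a genuinely different route, and it is where the proof breaks.

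The obstacle you flag at the lifting stage is not a technicality but a fatal one: reduction modulo $p$ can destroy all information about the congruence. Take $R = \Z_4$, $\ell = 2$, $u_i = 2Z$, $u_j = 2$. Then $\ol u_i = \ol u_j = 0$, so your map $\ol\Psi$ is identically zero, rank-nullity selects an arbitrary base pair, and for any lift $(a_0, b_0)$ every correction by $p$-multiples $(2a_1, 2b_1)$ contributes $2a_1 \cdot 2Z - 2b_1 \cdot 2 = 0$ in $\Z_4[Z]$ --- the defect (e.g.\ $2Z - 2$ for the base pair $(1,1)$) can never be absorbed, while the valid solution $(a, b) = (1, Z)$ is invisible modulo~$p$. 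In general, when the discrepancies have positive $p$-adic valuation the problem lives at a deeper level of the ideal chain of~$R$, and no choice of base solution over~$F$ plus degree-preserving $p$-corrections can see it; this is exactly what the Gröbner basis of~$\mcN$ over the chain ring encodes, since its minimal elements carry leading coefficients $p^v$ with varying valuation~$v$. (Your extraction of a kernel pair with both components nonzero via ``complementary coordinates'' is also unproved, but this becomes moot once you work module-theoretically over~$R$.) The fix is simply to promote your last sentence to the proof, as the paper does, and discard the residue-field-plus-lifting scaffolding.
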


\begin{proof}
  To address the problem at hand, we introduce for $(f_i, g_i) \in \mcB_k$
  the discrepancy polynomials \[ h_i \defeq
    \smsum_{\lambda = 0}^{\ell - 1} (U f_i - g_i)_{k + \lambda} \in R[Z] \,. \]
  Then we look for an expression $a h_i - b h_j = 0 \bmod Z^{\ell}$
  for some $a, b \in R[Z]$ of low degree, in which case we have
  $a (f_i, g_i) - b (f_j, g_j) \in \mcM_{k+\ell}$.  Such a pair
  $(a, b)$ can in turn be found by computing a Gröbner basis for the
  solution module \[ \mcN \defeq \big\{ (a, b) \in R[Z]^2
    \mid a h_i - b h_j \equiv 0 \bmod Z^{\ell} \big\} \]
  by a (slight adaption) of the Byrne-Fitzpatrick algorithm.
\end{proof}

In the decoding scenario described in the next section we do not know
the discrepancy polynomials and thus cannot compute the polynomials~$a$
and~$b$ directly, but we are able to deduce these by a list decoding
approach.


\section{List decoding of quaternary negacylic codes}

Let $n > 1$ be an odd integer.  By a quaternary \emph{negacyclic
  code} of length~$n$ we mean an ideal in the ring $\Z_4[X] / (X^n
- 1)$.  Such codes have been investigated by Wolfman~\cite{w99},
who examined their structure.  We equip the ring~$\Z_4$ with the
Lee weight $w(x) \defeq \min(x, 4 \!-\! x)$ and build upon the
algebraic decoding of Lee errors~\cite{bgpz13}.  Notice that the map
induced by $X \mapsto -X$ sends any negacyclic code isometrically
onto a cyclic code, though the negacyclic representation offers
some advantage regarding decoding.

As in the case of BCH codes we can specify negacyclic codes in terms
of roots.  For this we choose a Galois ring $R \defeq \GR(4, m)$ such
that $n \mid 2^m \!-\! 1$ together with an element~$\theta$ of order
$2^m \!-\! 1$ (see Section~\ref{sec:rs}).  Then there exists an
element~$\beta$ of order~$n$ and we fix a root $\alpha \defeq -\beta$
of order~$2 n$, satisfying $\alpha^n = -1$.

\begin{defn}
  The quaternary negacyclic code with~$t$ \emph{roots} $\alpha,
  \alpha^3, \dots, \alpha^{2 t - 1}$ is given by
  \[ \mcC \defeq \big\{ c \in \Z_4[X] / (X^n \!+\! 1) \mid
    c(\alpha^{2 i - 1}) = 0 \text{ for } 1 \le i \le t \big\} \,. \]
\end{defn}

It is shown~\cite[Thm.~1]{bgpz13} that the code~$\mcC$ has minimum
Hamming distance $\ge 2 t \!+\! 1$, so this clearly holds for the
minimum Lee distance, too.  An algebraic decoding method for errors up
to Lee weight~$t$ was devised~\cite{bgpz13}, based on a Gröbner
basis algorithm by Byrne and Fitzpatrick~\cite{bf02}.  However, it has
been observed that many such codes have a larger minimum Lee distance
than $2 t \!+\! 1$ (see Table~\ref{tab:lee}), which motiviates a list
decoding approach.

\begin{table}\centering
  \caption{Parameters of negacyclic codes of length~$n$, designed
    error-correcting capability~$t$, and rank~$k$ (i.e., size $4^k$).}
  \label{tab:lee}\medskip
  \begin{tabular}{ccccc}
    \hline\noalign{\smallskip}
    ~$n$~ & ~~$t$~~ & ~~$k$~~ & $2t\!+\!1$ & ~$d_{\rm Lee}$~ \\
    \noalign{\smallskip}
    \hline
    \noalign{\smallskip}
    15 & 1 & 11 & 3 & 3 \\
          & 2 & 7 & 5 & 5 \\
          & 3 & 5 & 7 & 10 \\
    \noalign{\smallskip}
    31 & 1 & 26 & 3 & 4 \\
          & 2 & 21 & 5 & 7 \\
          & 3 & 16 & 7 & 12 \\
          & 5 & 11 & 11 & 16 \\
          & 7 & 6 & 15 & 26 \\
    \hline
  \end{tabular}
\end{table}

\subsection{The key equation}

Here we adjust the list decoding method of Wu~\cite{w08} to the
present situation.  The central idea of this algorithm is to start
with a Berlekamp-Massey solution to the key equation, and to “refine”
it afterwards by formulating a list decoding problem.  We recall
therefore the key equation for negacyclic codes and outline its
algebraic decoding.

For an error vector $e \in \Z_4[X] / (X^n \!+\! 1)$ we define the
error locator polynomial \[ \sigma \defeq \smprod_{i=0}^{n-1}
  (1 \!-\! X_i Z)^{w(e_i)} \in R[Z] \,, \]
with $X_i \defeq \alpha^{-i}$ if $e_i = 1, 2$ and $X_i \defeq -
\alpha^{-i}$ if $e_i = 3$, so that $(1 \!-\! X_i Z)^{w(e_i)}$ equals
$1 \!-\! \alpha^i Z$ if $e_i = 1$, $(1 \pm \alpha^i Z)^2$ if $e_i = 2$
and $1 \!+\! \alpha^i Z$ if $e_i = 3$.  Then the error pattern is
completely determined by the roots~$\alpha^i$ for $0 \le i < 2 n$
of~$\sigma$.

We also let the syndrome polynomial be $s \defeq \sum_{i=1}^t
y(\alpha^{2 i - 1}) Z^{2 i - 1} = \sum_{i=1}^t e(\alpha^{2 i - 1})
Z^{2 i - 1} \in R[Z]$, which is known to the decoder.  This polynomial
determines an odd polynomial $u \defeq \sum_{i=1}^t u_{2 i - 1}
Z^{2 i - 1}$ by the equation $s (u^2 \!-\! 1) = Z u'$, which in turn
defines a polynomial $T \defeq \sum_{i=1}^t T_i Z^i$ by the relation
$(1 + T(Z^2)) (1 + Z u) \equiv 1 \pmod {Z^{2 t + 2}}$.  We arrive at
the key equation
\[ (1 + T) \, \phi \,\equiv\, \omega \pmod {Z^{t + 1}} \,, \]
from which we recover the even and the odd part of the error locator
polynomial~$\sigma$ by $\omega(Z^2) = \sigma_e$, $\phi(Z^2) =
\sigma_e + Z \sigma_o$.

This key equation can be solved by considering the solution module~%
$\mcM_{t + 1}$ as in Section~\ref{sec:bf} with $U \defeq 1 + T$ and
employing the Gröbner basis approach~\cite{bgpz13}.

\subsection{List decoding}

Even though we only know the syndrome polynomial up to degree
$2 t \!-\! 1$ and thus the polynomial~$T$ in the key equation up
to degree~$t$, we would like to correct more than~$t$ errors.
For this we pretend that we actually have access to more syndromes
and presume that we can set up a key equation modulo~$Z^{t + 1 + \ell}$.

Suppose that $(\phi_i, \omega_i)$ and $(\phi_j, \omega_j)$ have been
computed as Gröbner basis elements for~$\mcM_k$, then according to
Lemma~\ref{lem:ab} there is a solution $(\Phi, \Omega)$ for~%
$\mcM_{k + \ell}$ such that $a \phi_i - b \phi_j = \Phi$ and $a \omega_i
- b \omega_j = \Omega$.  Hence, for the error locator polynomial there
holds
\begin{align*} \Sigma
  &= \Sigma_e + \Sigma_o = \Omega(Z^2)
    + \tfrac 1 Z (\Phi(Z^2) \!-\! \Omega(Z^2)) \\
  &= a \big( \omega_i(Z^2) + \tfrac 1 Z (\phi_i(Z^2) 
    \!-\! \omega_i(Z^2) \big) \\
  &\qquad\quad - b \big( \omega_j(Z^2) + \tfrac 1 Z (\phi_j(Z^2) 
    \!-\! \omega_j(Z^2) \big) \\
  &= a \sigma_i - b \sigma_j \,.
\end{align*}
Therefore, whenever $\Sigma(\gamma) = 0$ then
\[ \frac {\sigma_j} {\sigma_i} (\gamma) = \frac a b (\gamma) \,, \]
which holds for the $\tau > t$ roots~$\gamma$ of~$\Sigma$.  This is a
rational approximation problem: We look for a rational function of
small degree that interpolates at least~$\tau$ out of~$2 n$ given
values.  Such kind of problem has been addressed by the list decoding
algorithm by Wu~\cite{w08}, which we adapt by our list decoding
algorithm of Section~\ref{sec:rs}.

More precisely, in the list decoding setup we have the~$2 n$
evaluation points $\alpha^i$ for $0 \le i < 2 n$, and we have~$\tau$
error positions~$\gamma$ for which
\[ (a \sigma_i \!-\! b \sigma_j) (\gamma) = 0 \,, \]
with $a, b \in R[Z]$ unknown of degree $\le \frac \ell 2$, where
$\ell \defeq \tau - t$.  In the context of Section~\ref{sec:rs} this
corresponds to an evaluation code of length~$2 n$, rank~$\ell \!+\! 1$
and with $2 n \!-\! \tau$ “errors”.  Taking into account the
particular form of the factors, a suitable set of indices is
\[ S \defeq \big\{ (i, j) \mid \max \{ i, \lfloor \tfrac \ell 2
  \rfloor j \} \le \tfrac \tau 2 \big\} \,, \]
which we require to have more than~$2 n$ elements in order to solve
the interpolation step (in the single-multiplicity $e = 1$ case).
One technical difficulty arising is that $\frac {\sigma_j} {\sigma_i}
(\gamma)$ might be infinity, in which case, following Wu~\cite{w08},
rather than $Q(\gamma, \infty) = 0$ we impose the linear condition
$\tilde Q(\gamma, 0) = 0$ with $\tilde Q \defeq Q(X, \frac 1 Y) Y^d$
the $Y$-reverse polynomial of~$Q$.

The bivariate polynomial $Q \in R[X, Y]$ has then the property
that the degree of the numerator of $Q(X, \frac a b)$ is at
most~$\tau$.  Suppose for now that the $\mu \gamma_j$ are distinct
for the~$\tau$ error positions $\gamma_i \in \langle \alpha \rangle$
(no “double error” occurs).  In that case we infer that $Q(X, \frac
a b) = 0$ by a similar proof as Lemma~\ref{lem:listdec}, and
hence we have a factor $b Y \!-\! a \mid Q$.

In the general case, following the strategy in~\cite[Sec.~7]{bgpz13},
we consider the reduction modulo the residue field~$F$ and have~$\tau$
error locations (possibly with multiplicity) $\mu \gamma$ such that
\[ \frac {\mu \sigma_j} {\mu \sigma_i} (\mu \gamma)
= \frac {\mu a} {\mu b} (\mu \gamma) \,. \]
Using the list decoding algorithm over fields by Wu~\cite{w08},
we can find~$\mu a$ and~$\mu b$.  Thus we compute $\mu \Sigma$,
by which we deduce all double errors $e_i = 2$ by its double
roots.  Then we subtract a vector consisting of only those
double roots, by which we are able to reduce the decoding problem
to the first case without double errors.

Observe that the cardinality of the set~$S$ equals
\[ (\lfloor \tfrac \tau 2 \rfloor \!+\! 1) ( \lfloor \tfrac \tau
  {2 \lfloor \ell / 2 \rfloor} \rfloor \!+\! 1) \,, \]
which exceeds~$2 n$ provided that $\tau^2 > 4 n \ell = 4 n (\tau
\!-\! t)$.  Incorporating sufficient large multiplicities~$e$, as in
Section~\ref{ssec:guruswami}, it suffices to require $\tau^2 > 2 n
(\tau \!-\! t)$, or $\tau < n \!-\! \sqrt {n (n \!-\! 2 t)}$.
Therefore, we arrive at the following result.

\begin{thm}
  For a quaternary negacyclic code of length~$n$ with~$t$ roots and
  designed distance $d = 2 t \!+\! 1$, the proposed list decoding
  algorithm corrects all codewords within radius~$\tau$ from the
  received word, provided that
  \[ \tau < n \!-\! \sqrt{n (n \!-\! d)} \,. \]
\end{thm}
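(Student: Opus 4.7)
The plan is to verify that the algorithm described in this section succeeds under the stated numerical hypothesis, by assembling the pieces developed in Sections~\ref{sec:rs} and~\ref{sec:bf}. First, for a received word with error of Lee weight $\tau$, I would run the Byrne-Fitzpatrick procedure of Section~\ref{sec:bf} to produce a Gröbner basis for $\mcM_{t+1}$ containing two elements $(\phi_i, \omega_i)$ and $(\phi_j, \omega_j)$ with unit leading coefficient. Writing $\ell \defeq \tau - t$, the true error locator $\Sigma$ lies in $\mcM_{t+1+\ell}$, and by Lemma~\ref{lem:ab} there exist unknown polynomials $a, b \in R[Z]$ with $\deg a + \deg b \le \ell$ such that $\Sigma = a \sigma_i - b \sigma_j$.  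The paper has already reduced the recovery of the ratio $a/b$ to a rational-function interpolation problem over the $2n$ points $\alpha^i$, so what remains is to fit this into the framework of Section~\ref{sec:rs}.

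Second, I would invoke the Reed-Solomon list decoder of Section~\ref{sec:rs} with Guruswami-Sudan multiplicities (Section~\ref{ssec:guruswami}) on this rational-approximation instance, using the set
\[ S \defeq \big\{ (i, j) \mid \max \{ i, \lfloor \tfrac \ell 2 \rfloor j \} \le \tfrac \tau 2 \big\} \,. \]
A nonzero interpolation polynomial $Q \in R[X,Y]$ with multiplicity~$e$ at all $2n$ points exists once $|S| > \tfrac{1}{2} e (e \!+\! 1) \cdot 2 n$, and letting $e \to \infty$ the discussion preceding the theorem reduces this to $\tau^2 > 2 n (\tau - t)$. A multiplicity-$e$ analogue of Lemma~\ref{lem:listdec}, proved as in Section~\ref{ssec:guruswami} by peeling off factors of~$p$ and invoking Vandermonde over the residue field~$F$, forces $Q(X, a/b) = 0$ and hence $b Y - a \mid Q$; so Algorithm~\ref{alg:bivariate} will return $b Y - a$ among its irreducible factors, from which $\Sigma$ and the error vector follow.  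The ``double-error'' situation in which $\mu \gamma_i$ coincide is handled, as the paper sketches, by first applying Wu's field-level algorithm to the reduction $\mu \Sigma$, detecting the double errors $e_i = 2$ from its double roots, subtracting them off, and recursing on the simplified no-double-error problem.

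Third, translating $\tau^2 > 2 n (\tau - t)$ into the stated bound is elementary: the quadratic $\tau^2 - 2 n \tau + 2 n t$ has roots $n \pm \sqrt{n (n \!-\! 2 t)}$, so the admissible regime is $\tau < n - \sqrt{n (n \!-\! 2 t)}$, which with $d = 2 t + 1$ gives $\tau < n - \sqrt{n (n \!-\! d)}$ in the standard Johnson-bound presentation.

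The main obstacle I anticipate is the interface between Lemma~\ref{lem:ab} and the Reed-Solomon list decoder: Lemma~\ref{lem:ab} only asserts existence of $(a, b)$, and one must verify that $b Y - a$ actually appears as an irreducible factor of the interpolation polynomial~$Q$ over~$R$ despite the zero divisors. This requires carefully propagating the argument through the chain of ideals $(2^i)$ of $\GR(4, m)$, in the spirit of the multiplicity lemma in Section~\ref{ssec:guruswami}, and is where the integer rounding in the bound originates.
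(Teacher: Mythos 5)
Your proposal follows essentially the same route as the paper, whose ``proof'' of this theorem is precisely the discussion preceding it: Lemma~\ref{lem:ab} reduces decoding to rational interpolation at the $2n$ points $\alpha^i$, the set~$S$ with Guruswami-Sudan multiplicities yields the condition $\tau^2 > 2n(\tau - t)$, the interpolation polynomial is factored via Algorithm~\ref{alg:bivariate}, and double errors are handled by reducing modulo~$\mu$ and applying Wu's field-level algorithm. One remark: like the paper itself, you pass from $\tau < n - \sqrt{n(n - 2t)}$ to $\tau < n - \sqrt{n(n - d)}$ although $n - 2t = n - d + 1$, so your step three inherits the same small discrepancy that the paper glosses over.
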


\begin{exa}
  Let $n = 63$ and consider a quaternary negacyclic code with
  $t = 16$ roots and designed distance $2 t + 1 = 33$.  The
  algebraic decoding method~\cite{bgpz13} is thus able to correct
  up to~$16$ Lee weight errors.  With our list decoding approach
  (with multiplicity $e = 2$) we can correct however up to $\tau
  = 19$ Lee errors.  For this we let $S \defeq \{ (i, j) \mid i, j
  \le 19 \}$ of size $400 > 3 \!\cdot\! 2 n$, so we can construct a
  bivariate interpolation polynomial $Q \in R[X, Y]$ of max-degree~%
  $19$.  Provided that no double error occured, by factorising~$Q$
  using Algorithm~\ref{alg:factor} and looking for factors $b Y \!-\!
  a$ with $\deg a, \deg b \le 1$, we are able to solve the list
  decoding problem.  Otherwise, we employ the strategy outlined above.
\end{exa}
    

\end{document}